\title{Lower bounds on the M\"{u}nchhausen problem}
\author{Michael Brand\\
\texttt{michael.brand@alumni.weizmann.ac.il}}
\affil{Faculty of IT, Monash University\\
Clayton, VIC 3800\\
Australia}
\date{\today}
\def\mathbi#1{\textbf{\em #1}}
\newcommand{\vn}{\vec{\mathbi{n}}}
\newcommand{\M}{M\"{u}nchhausen\xspace}
\renewcommand{\O}{\mathop{\mathrm{O}}}
\newcommand{\Om}{\mathop{\mathrm{\Omega}}}
\newcommand{\ceil}[1]{\left\lceil #1 \right\rceil}
\newcommand{\defeq}{\stackrel{\text{def}}{=}}
\newtheorem{thm}{Theorem}
\begin{document}
\maketitle

\begin{abstract}
``The Baron's omni-sequence'', $B(n)$, first defined by Khovanova and Lewis
(2011), is a sequence that gives for each $n$
the minimum number of weighings on balance
scales that can verify the correct labeling of $n$ identically-looking
coins with distinct integer weights between $1$ gram and $n$ grams.

A trivial lower bound on $B(n)$ is $\log_3 n$, and it has been shown that
$B(n)$ is $\log_3 n + \O(\log \log n)$. In this paper we give a first
nontrivial lower bound to the \M problem, showing that there is an
infinite number of $n$ values for which $B(n)\neq \lceil\log_3 n\rceil$.

Furthermore, we show that if $N(k)$ is the number of $n$ values for which
$k=\lceil\log_3 n\rceil$ and $B(n)\neq k$, then $N(k)$ is an unbounded
function of $k$.
\end{abstract}

\section{Introduction}
Coin-weighing puzzles have been abundantly discussed in the mathematical
literature over the past 60 years
(see, e.g.\ \cite{Steinhaus:snapshots,Halbeisen:general_counterfeit,
Smith:counterfeit,Dyson:Pennies}). In coin-weighing problems one must typically
identify a counterfeit coin from a set of identically-looking coins by use of
balance scales, utilizing the knowledge that the counterfeit coin has
distinctive weight. This can be generalized to the problem of identifying a
coin, or a subset of the coins, based on distinctive weight characteristics,
or, alternatively, to the problem of establishing the weight of a
given coin.

This paper relates to a different kind of coin-weighing puzzle, which we call
``The \M coin-weighing problem'' (following, e.g., \cite{Brand:Munchhausen2}).
Consider the following question: given $n$ coins with distinct
integer weights between $1$ gram and $n$ grams, each labeled by a distinct
integer label between $1$ and $n$, what is the minimum number of weighings
of these $n$ coins on balance scales that can prove unequivocally that all
coins are labeled by their correct weight?

This question differs from classic coin-weighing problems in that we do not
need to \emph{discover} the weights, but only to determine whether or not
a given labeling of weights is the correct one. To establish the weights one
would require $\Om(n \log{n})$ weighings (as can be proved by reasoning similar
to that which establishes lower bounds for comparative sorting
\cite{Knuth:programming,Cormen:algorithms}), whereas merely
verifying an existing labeling can be performed trivially in $\O(n)$
weighings.

This question, inspired by a riddle that appeared in the Moscow Mathematical
Olympiad \cite{Tokarev:Olympiad}, gives rise to an integer sequence, $B(n)$,
that was studied in \cite{Khovanova:Baron} and was dubbed there ``The Baron's
omni-sequence''. It appears as sequence A186313 in the On-line Encyclopedia of
Integer Sequences \cite{A186313:Omnisequence}.

Though much progress has been made to tighten the known upper bounds on
$B(n)$ \cite{Khovanova:Baron, Brand:Tightening, Brand:Munchhausen2}, the trivial
lower bound of $\log_3 n$ has proved surprisingly resilient. This lower bound
stems from the straightforward observation that if the number of weighings is
less than $\log_3 n$, there must be at least two coins that participate in all
weighings in identical roles. (For each weighing, they are either both on the
left-hand side of the scales, both on the right-hand side or both held out from
the weighing.)
This being the case, the weights of the two coins can be exchanged with no
change to the outcome of any of the weighings, and therefore the weighings
cannot provide an unequivocal verification of the weights.

In this paper we present a first nontrivial lower bound for this problem.
Namely, we prove the following theorem.

\begin{thm}\label{T:lower}
For any $n$,
\begin{equation}\label{Eq:loglog}
3^{B(n)} \ge n+\Om(\log \log n).
\end{equation}
Equivalently,
\begin{equation}\label{Eq:log}
N(k)\in\Om(\log k),
\end{equation}
where $N(k)$ is the number of $n$ values for which
$k=\lceil\log_3 n\rceil$ and $B(n)\neq k$.
\end{thm}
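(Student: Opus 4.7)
The plan is to prove the contrapositive of~\eqref{Eq:loglog}: I will show that for any $n$ satisfying $n > 3^k - c \log k$, where $k = \lceil \log_3 n \rceil$ and $c > 0$ is an absolute constant, no $k$-weighing scheme can verify the labeling. This directly yields $N(k) \in \Om(\log k)$ and hence also~\eqref{Eq:loglog}. I would encode any $k$-weighing scheme as a matrix $V \in \{-1, 0, 1\}^{k \times n}$, with column $v_i$ recording coin $i$'s role (left pan, right pan, or set aside) in each weighing. The scheme succeeds iff no $\pi \in S_n \setminus \{\mathrm{id}\}$ satisfies $V \vec{d}_\pi = 0$, where $(\vec{d}_\pi)_i = \pi(i) - i$. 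Any such $\pi$ would be an alternative weight assignment yielding exactly the same outcomes as the truth, and a fortiori the same signs. Transposition displacements force the columns to be distinct, recovering the trivial bound $n \le 3^k$.

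To go beyond this, I would focus on 3-cycles supported on collinear triples of columns. If three coins have labels $X, Y, Z$ and columns satisfying $v_Y = (v_X + v_Z)/2$---a ``line'' in $\{-1, 0, 1\}^k$---a direct computation (substituting $v_Y$ and using $v_X \neq v_Z$) shows that the 3-cycle $(X, Y, Z)$ is invariant iff $2X = Y + Z$, and the reverse 3-cycle $(X, Z, Y)$ is invariant iff $2Z = X + Y$. In both cases the three labels must form an arithmetic progression and the numerically middle label must sit on an endpoint of the line rather than at its midpoint. Consequently, to avoid every such invariance, the verifier must ensure that at each collinear triple of columns whose assigned labels happen to form an AP, the numerically middle label lies at the geometrically middle column $v_Y$. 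The number of collinear triples in $\{-1, 0, 1\}^k$ is $(5^k - 3^k)/2$---obtained by summing $3^{\abs{\{i : e_i = 0\}}}$ over nonzero directions $e \in \{-1, 0, 1\}^k$ taken modulo sign---while the number of APs in $\{1, \ldots, n\}$ is $\Th(n^2)$, so this is a dense system of packing constraints.

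The main obstacle is to leverage this packing problem to exclude precisely $\Om(\log k)$, rather than a trivial constant number, of columns. A naive random or greedy argument yields only $\Om(1)$ savings. I anticipate that the logarithmic improvement will require an iterative construction: at each of $\Om(\log k)$ rounds, identify a new family of collinear triples whose AP-alignment constraints become mutually incompatible with the choices already made, thereby forcing at least one additional column to be left unused per round. Permutation types beyond 3-cycles---in particular, double transpositions $(a,b)(c,d)$, whose invariance requires $v_a + v_c = v_b + v_d$ together with $a + d = b + c$ (or an analogous pairing condition)---impose only further constraints and can only strengthen the bound. They may supply alternative quantitative routes to the same conclusion, but do not obviate the need for a careful packing analysis to reach the $\Om(\log k)$ target.
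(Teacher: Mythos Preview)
Your proposal has a genuine gap: you never actually prove the $\Omega(\log k)$ bound. You set up a family of constraints coming from 3-cycles on collinear triples, correctly compute when such a 3-cycle is invariant, and then say you ``anticipate'' an iterative construction that forces one more column to be unused per round. But this construction is never specified, and you yourself concede that naive arguments give only $\Omega(1)$. Without the packing argument actually carried out---and with no indication of why the collinear-AP constraints must become mutually incompatible after $\Omega(\log k)$ rounds rather than, say, after a bounded number---this is a research plan, not a proof. (A minor side issue: your ``iff'' is wrong, since the scheme can also fail via a $\pi$ that changes the weighted sums while preserving their signs; for a lower bound only one direction is needed, and you do supply that.)

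The paper's proof is completely different and sidesteps any packing analysis. Instead of studying which \emph{column} permutations leave the outcome invariant, it exploits \emph{row} permutations of the weighing matrix $M$. Since each of the $k$ rows yields one of three signs, at least $\lceil k/3\rceil$ rows share a sign, so there is a set $R$ of at least $(\lceil k/3\rceil)!$ row permutations $\sigma$ with $w(\sigma M)=w(M)$. The key step is that the map $\sigma\mapsto\{\text{columns of }\sigma M\}$ is injective: if $\sigma_1 M$ and $\sigma_2 M$ had the same column set they would be related by a nontrivial column permutation $\pi$, and then $\pi$ witnesses that $\sigma M$ (with $\sigma=\sigma_2^{-1}\sigma_1\in R$) is not M\"unchhausen, hence neither is $M$. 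This injection gives $(\lceil k/3\rceil)!\le\binom{3^k}{n}<3^{k(3^k-n)}$, which forces $3^k-n=\Omega(\log k)$ by a one-line calculation---no iterative scheme or line-packing needed.
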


\section{Proof of the main theorem}\label{S:proof}

We begin by introducing some terminology. First, following
\cite{Brand:Munchhausen2}, we describe sequences of weighings by means of
matrices.
A $k \times n$ matrix, $M$, whose elements, $M_{ij}$ belong to the set
$\{-1,0,1\}$, describes a sequence of $k$ weighings of $n$ coins. If
$M_{ij}=1$, this indicates that coin $j$ is to be placed on the right hand
side of the scales on the $i$'th weighing. If it is $-1$, the coin is to be
placed on the left hand side. A ``$0$'' indicates that on the $i$'th weighing
the coin is to be held out.

In the case of the \M problem, it is known what weights the coins to be
weighed are: the first coin weighs $1$ gram, the second weighs $2$ grams,
and so on. We describe these weights by the vector $\vn=[1,\ldots, n]^T$.
The result of the weighing sequence is therefore given by the element-wise signs
of the vector $M\vn$. We describe the operation including both multiplication
by $\vn$ and sign-taking by the single operator $w(M)$.

A matrix is \emph{\M} if the sequence of weighings it describes generates a
sequence of weigh results (signs) when weighing $\vn$ that is unique among
all possible permutations of $\vn$. Equivalently, a matrix is \M if
$w(M)=w(M\pi) \Rightarrow \pi=I$ for an $n \times n$ permutation matrix $\pi$.

The Baron's omni-sequence is the sequence that gives for each $n$ the
minimum $k$ for which there exists a $k \times n$ \M matrix.

Theorem~\ref{T:lower} gives a first nontrivial lower bound on $k$. We prove
it now.

\begin{proof}[Proof of Theorem~\ref{T:lower}]
Consider, first, the trivial lower bound for the Baron's omni-sequence. In
matrix terminology, we claim that if a $k\times n$ matrix, $M$, is \M, then
$n \le 3^k$. The reason for this is that if $n>3^k$, at least two of $M$'s
columns are identical. A permutation $\pi$ permuting the columns of $M$
by switching identical columns will have no effect on it: we have
$M=M\pi$, and therefore necessarily also $w(M)=w(M\pi)$.

The relevant observation regarding this proof is that it demonstrates that the
columns of $M$ must be distinct. Because they all belong to the set
$\{-1,0,1\}^k$, of size $3^k$, the set, $C$, of choices for the set of $M$'s
columns (ignoring their order) is limited by $|C|\le\binom{3^k}{n}$.

Consider, now, row permutations on $M$. For an $M$ with a large $k$, there are
many row permutations of $M$ that do not change $w(M)$. For example, consider
that each row of $M$ generates a sign that has only $3$ possibilities. As
such, there will be at least $\lceil k/3 \rceil$ rows that share the same
generated sign. Any $\sigma_1$, $\sigma_2$ of the $(\lceil k/3 \rceil)!$
possible row permutations on $M$ that keep all rows other than these
$\lceil k/3 \rceil$ as fixed points share the same
$w(\sigma_1 M)=w(\sigma_2 M)=w(M)$. We define $R$ to be the set of all row
permutations that satisfy $w(\sigma M)=w(M)$, noting that
$|R|\ge(\ceil k/3 \rceil)!$.

We claim that for $M$ to be \M,
\begin{equation}\label{Eq:CR}
|C|\ge |R|.
\end{equation}

If we define $l=3^k-n$, then $|C|\le\binom{3^k}{n}$ implies $|C|<3^{kl}$. On the
other hand, $\log_3 |R|$ is $\Om(k \log k)$, so Equation~\eqref{Eq:CR} implies
that $l$ is $\Om(\log k)$.

Because a lower bound for $l$ is also a lower bound for $N(k)$ of
Equation \eqref{Eq:log}, Equation~\eqref{Eq:CR} directly implies
Equation \eqref{Eq:log}, which, in turn, implies Equation \eqref{Eq:loglog},
because $k$ is $\Om(\log n)$. In other words, proving Equation~\eqref{Eq:CR}
is tantamount
to proving the entire theorem. We now proceed to establish this claim.

We define the relation $f:R\rightarrow C$ as follows. For $\sigma\in R$,
$f(\sigma)$ is the set of columns of $\sigma M$. Because changing the order
of the weighings clearly has no effect on whether or not a set of weighings
establishes unequivocally the weights of $n$ coins, $\sigma M$ is \M if and
only if $M$ is \M, so by definition the set of columns of $\sigma M$ is
necessarily a member of $C$. Instead of showing Equation~\eqref{Eq:CR}, we
make the stronger claim that $f$ is one-to-one.

To prove this, let us assume to the contrary that $f$ is not one-to-one. This
indicates the existence of two row permutations $\sigma_1,\sigma_2\in R$ for
which $f(\sigma_1)=f(\sigma_2)$. Because the application of a permutation is
invertible, we know that $\sigma_1 M \neq \sigma_2 M$. The two are therefore
related by a column permutation, $\pi$, which is not the identity, as follows:
\[
\sigma_1 M \pi = \sigma_2 M.
\]
Let $\sigma\defeq \sigma_2^{-1} \sigma_1$, then
\[
\sigma M \pi = M.
\]
Recall that by definition of $R$, we have $w(M)=w(\sigma M)$, so
\[
w(\sigma M \pi)=w(M)=w(\sigma M),
\]
so by definition $\sigma M$ cannot be a \M matrix. However, as argued
earlier, $\sigma M$ is \M if and only if $M$ is \M, so the above implies
that $M$, too, is not \M, contradicting the assumption.
\end{proof}

\section{Conclusions}
With the new Theorem~\ref{T:lower}, the best known bounds now place $n$
between $3^k - \Om(\log k)$ and $3^k / \O(\text{polylog}\; k)$, for $n$ to
satisfy $B(n)=k$.
This still leaves a significant window for
further refinement. At the current time, it is not even known whether $B(n)$
is a monotone sequence.


\end{document}